\newtheorem{theorem}{Theorem}
\newtheorem{corollary}[theorem]{Corollary}
\newtheorem{proposition}[theorem]{Proposition}
\newcommand*{\R}{\mathbb R} 
\begin{document}

\qquad\qquad\qquad Dedicated to the 70th anniversary of Peter Olver
\bigskip

\bigskip

\title[Third order differential operators]{Natural differential invariants and equivalence of third order nonlinear differential operators}
\author[Valentin Lychagin]{Valentin Lychagin}
\address{Institute of Control Sciences of RAS, Moscow, Russia}
\email{Valentin.Lychagin@uit.no}
\author[Valeriy Yumaguzhin]{Valeriy Yumaguzhin}
\address{Program Systems Institute of RAS, Pereslavl'-Zales\-skiy, Russia} 
\email{yuma@diffiety.botik.ru}
\thanks{V. Yumaguzhin is the corresponding author}
\subjclass[2010]{Primary: 58J70,53C05,35A3; Secondary: 35G05,53A55}
\keywords{3rd order linear partial differential operator, jet bundle, differential invariant, equivalence problem}

\begin{abstract}
 We give a description of the field of rational natural differential
invariants for a class of nonlinear differential operators of the third
order on a two dimensional manifold and show their application to the equivalence problem of such operators.
\end{abstract}

\maketitle
\section{Introduction}

In this paper, we continue  to study rational differential invariants of differential operators on two-dimensional manifolds.  Here we consider a class of nonlinear operators 
that in local coordinates $x_1, x_2$ have the following form:
\begin{multline}\label{Aw}
A_{w}\colon f\longmapsto \sum\limits_{i,j,k} a_{ijk}(x,f) \frac{\partial
^{3}f}{\partial x_i\partial x_j\partial x_k}+\sum\limits_{i,j} a_{ij}\left( x,f\right) \frac{\partial
^{2}f}{\partial x_{i}\partial x_{j}}\\
+\sum\limits_{i,j}a_{i}\left( x,f\right) 
\frac{\partial f}{\partial x_{i}}+a_{0}\left( x,f\right)f,  
\end{multline}
where $x=( x_1, x_2)$, $f=f( x)$. 

We call such operators as \textit{weakly nonlinear} operators.

In paper \cite{LY2w}, we found rational differential invariants of the 2nd order weakly nonlinear operators and used them to solve the local equivalence problem. 

Here we use the methods of \cite{LY2w} to find rational differential invariants for the 3rd order  weakly nonlinear operators  and apply them to solve the local as well as the global equivalence problem.

Additionally we will assume that all coefficients $a(x,y)$ of these operators 
are smooth in $x$ and rational in $y$.

It is easy to see that the pseudogroup of local diffeomorphisms (in variables $x$) naturally acts in this class of operators. 

The main step in studies of such operators is a representation of operator $A_w$ as a pair 
$(A, f)$ (we call it as a ({\it related pair}), where 
\begin{multline}\label{A}
A=\sum\limits_{i,j,k}a_{ijk}\left( x,y\right) \frac{\partial ^{3}}{\partial
x_{i}\partial x_{j}\partial x_{k}}+\sum\limits_{i,j}a_{ij}\left( x,y\right) \frac{\partial ^{2}}{\partial
x_{i}\partial x_{j}}\\
+\sum\limits_{i}a_{i}\left( x,y\right) \frac{\partial}{\partial x_i}+a_0(x, y)  
\end{multline}%
is a linear third order linear differential operator on the extended space $(x, y)$.

The {\it procedure of descent} allows us to get differential invariants of weakly nonlinear operators from invariants of  related pairs and  linear
operators of the form 
\begin{multline}\label{A_f}
A_{f}=\sum\limits_{i,j,k}a_{ijk}\left( x,f\right) \frac{\partial
^{3}}{\partial x_{i}\partial x_{j}\partial x_{k}}+\sum\limits_{i,j}a_{ij}\left( x,f\right) \frac{\partial^{2}}{\partial x_{i}\partial x_{j}}\\
+\sum\limits_{i}a_{i}\left( x,f\right)\frac{\partial }{\partial x_{i}}+a_{0}\left( x,f\right).  
\end{multline}

Using this procedure and a machinery of reduced Gr\"{o}bner bases, we propose a method to find invariants of the 3rd order weakly nonlinear operators. 

\section{Notations}

In this paper, we use the same notations as  in \cite{LY3, LYk}. 

Let $M$ be $n$-dimensional manifold and let $\tau\colon TM\to M$ and \\$\tau^*\colon T^*M \to M$ be respectively tangent and cotangent bundles over $M$.  

We will denote by $\mathit{Diff}_k (M)$ the module of linear $k$-th order operators, acting in $C^{\infty}(M)$. The corresponding vector bundle of such operators we denote by 
$\chi_k\colon\mathbf{Diff}_k\to M$, thus $\mathit{Diff}_k(M)$ is also the module of smooth sections of $\chi_k$, $\mathit{Diff}_k(M)=C^{\infty}(\chi_k)$.

The group of diffeomorphisms of $M$ will be denoted by $\mathcal{G}(M)$ and the  multiplicative group of nowhere vanishing functions on $M$ will be denoted as $\mathcal{F}(M)$.

We denote by $\Sigma^k(M)$ and $\Sigma_k(M)$ the modules of symmetric $k$-forms and $k$-vectors respectively.

Also, we denote by $\Omega^k$ the modules of exterior $k$-forms.

\section{Linear differential operators on $M$}

In this section, we recall the necessary facts from \cite{LY3} on scalar linear third order differential operators on two-dimensional manifolds.

Let $M$ be 2-dimensional smooth oriented connected manifold and let
$$
  \pi\colon M\times\R\longrightarrow M
$$
be  trivial line bundle.
Denote by 
\begin{equation*}
\pi _{k}\colon\mathbf{J}^{k}(\pi) \longrightarrow M
\end{equation*}
the bundles of $k$-jets of smooth sections of $\pi$ (i.e. $k$-jets of smooth functions on $M$).

Let $\mathcal{G}(M)$ be the group of diffeomorphisms of the manifold $M$. 

Then, together with the action of $\mathcal{G}\left( M\right) $ on $M$, we have also the actions $\mathcal{G}\left( M\right) $ in the bundles $\pi _{k}$ by prolongations of diffeomorphisms. 

The prolongations of diffeomorphisms $\phi \in \mathcal{G}(M)$ in the bundles $\pi _{k}$ will be denoted by $\phi ^{(k)}$.

\subsection{} Let 
\begin{equation*}
\chi_3\colon\mathbf{Diff}_{3}\left( M\right) \longrightarrow M
\end{equation*}%
be the bundle of the third order linear differential operators on $M$ and 
\begin{equation*}
\chi_{3, k}\colon\mathbf{J}^{k}(\chi_3) \longrightarrow M
\end{equation*}
be bundles of their $k$-jets of sections of $\chi_3$.

The group $\mathcal{G}\left( M\right) $ acts on operators $A\in\mathit{Diff}_{3}(M)$ in natural way:%
$$
\phi_{\ast }\colon A \rightarrow \phi _{\ast }(A) =\phi_{*}\circ A \circ \phi _{*}^{-1},
$$
where the morphism $\phi_{\ast}\colon C^{\infty }(M) \longrightarrow C^{\infty}(M)$ is defined by the formula
$$
  \phi_{\ast }=(\phi ^{-1}) ^{\ast}\colon h\mapsto h\circ \phi ^{-1}.  
$$

\subsection{Symbols} By the {\it symbol} $\sigma_{3, A}$ of the operator $A$, we means the equivalence class 
$$
  \sigma_{3, A}= A\!\!\!\mod\!\mathit{Diff}_2(M).
$$

Let an operator $A\in\mathit{Diff}_3(M)$ be represented in local coordinates $x_1, x_2$ of $M$ in the form
\begin{multline}\label{A1}
A=a_1\partial^3_1+3a_2\partial^2_1\partial_2+3a_3\partial_1\partial^2_2+a_4\partial^3_2\\
+b_1\partial^2_1+2b_2\partial_1\partial_2+b_3\partial^2_2
+c_1\partial_1+c_2\partial_2+a_0,
\end{multline}
 where the all coefficients  are smooth functions of $x=(x_1, x_2)$, $\partial_1$ and 
 $\partial_2$ are $\partial_{x_1}$ and  $\partial_{x_2}$ respectively.

Then the symbol $\sigma_{3, A}$ is identified with the symmetric 3-vector
\begin{equation}\label{Smbl}
  \sigma_{3, A}=a_1\partial^3_1+3a_2\partial^2_1\partial_2+3a_3\partial_1\partial^2_2
  +a_4\partial^3_2\in\Sigma_3(M) ,
\end{equation}
where dots and degrees are symmetric products of the vector fields $\partial_1,\; \partial_2$. 

The symbol $\sigma_{3,A}$ is {\it regular} (at point or in domain) if it as a homogeneous cubic  polynomial on the cotangent bundle $T^*(M)$ has distinct roots. 

Denote by $\Delta(\sigma_{3, A})$  the discriminant of $\sigma_{3, A}$, then
$$
\Delta(\sigma_{3, A})=6a_1a_2a_3a_4-4(a_1a_3^3+a_4a_2^3)+3a_2^2a_3^2-a_1^2a_4^2.
$$ 
 
Recall that the symbol $\sigma_{3, A}$ has three distinct real roots if $\Delta(\sigma_{3, A})>0$ and one real and two complex roots if $\Delta(\sigma_{3, A})<0$. 

Thus, we say that the operator $A$ is {\it regular} if its symbol $\sigma_{3,A}$ is regular or $\Delta(\sigma_{3, A})\neq 0$

More over, we say that an operator $A$ is {\it hyperbolic} if  $\Delta(\sigma_{3, A})>0$ and {\it ultrahyperbolic}  if $\Delta(\sigma_{3, A})<0$.

Locally, the symbol of a hyperbolic operator can be presented as a symmetric product of pair wise  linear independent vector fields $\sigma_{3, A}=X_1\cdot X_2\cdot X_3$, and therefore there are local coordinates $x_1, x_2$ such that 
\begin{equation}\label{Smblh}
\sigma_{3, A}=(a\partial_1+b\partial_2)\cdot\partial_1\cdot\partial_2,
\end{equation}
where $a$ and $b$ are smooth functions and $ab\neq 0$. 

For the case of ultrahyperbolic  operators we have $\sigma_{3, A}=X\cdot q$, where $X$ is a nonzero vector field and $q\in\Sigma_2$ is a positive symmetric 2-vector. Therefore, there are local coordinates $x_1, x_2$ such that
\begin{equation}\label{Smblu}
\sigma_{3, A}=(a\partial_1+b\partial_2)\cdot(\partial^2_1+\partial^2_2),
\end{equation}
where $a$ and $b$ are smooth functions and $a^2+b^2>0$.

\subsection{Wagner connections}
The following result due to Wagner \cite{Wgnr, LY3}.
\begin{theorem} $\phantom{1}$
\begin{enumerate}
\item  Let $A\in {\it Diff}_3(M)$ be a regular differential operator. Then there exist a unique linear connection $\nabla$ such that the symbol $\sigma_{3, A}$ is parallel with respect to $\nabla$. 
\item The curvature tensor of $\nabla$ is zero.
\end{enumerate}
\end{theorem}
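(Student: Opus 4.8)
The plan is to build the connection $\nabla$ explicitly from the requirement that the symmetric 3-vector $\sigma_{3,A}$ be parallel, and then check uniqueness and flatness. First I would exploit the normal forms \eqref{Smblh} and \eqref{Smblu}: since the statement is local and the condition $\nabla\sigma_{3,A}=0$ is tensorial, it suffices to produce $\nabla$ in suitable local coordinates and then patch. For a hyperbolic operator, choose coordinates so that $\sigma_{3,A}=(a\partial_1+b\partial_2)\cdot\partial_1\cdot\partial_2$ with $ab\neq 0$; after a further change of coordinates one may even assume $\sigma_{3,A}=\partial_1\cdot\partial_2\cdot(\partial_1+\partial_2)$ or keep it in the three-root form $X_1\cdot X_2\cdot X_3$ with the $X_i$ pairwise independent. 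The key algebraic point is that a torsion-free connection on a surface has $6$ Christoffel symbols $\Gamma^k_{ij}$, while the equation $\nabla_\ell\,\sigma_{3,A}=0$ expanded in the frame gives, for each $\ell=1,2$, a system of $4$ linear equations in the $\Gamma$'s (one for each component of a symmetric $3$-vector in two variables); together with torsion-freeness $\Gamma^k_{ij}=\Gamma^k_{ji}$ this is $8$ equations for $6$ unknowns. Regularity — i.e. $\Delta(\sigma_{3,A})\neq 0$, equivalently distinct roots — is exactly the nondegeneracy condition that makes this overdetermined system have a unique solution, so I would show that the coefficient matrix has full rank precisely when the three roots are distinct.

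Concretely, I would diagonalize the symbol: in a frame $e_1,e_2$ adapted so that $\sigma_{3,A}=e_1^3+e_2^3$ (possible in the hyperbolic case over $\mathbb{R}$ after complexifying/rescaling, or $\sigma_{3,A}=e_1\cdot(e_1^2+e_2^2)$ in the ultrahyperbolic case) the parallelism equations decouple nicely. Writing $\nabla e_i = \sum_k \omega_i^k e_k$ for connection $1$-forms $\omega_i^k$, the condition $\nabla(e_1^3+e_2^3)=0$ reads $3e_1^2\otimes\nabla e_1 + 3e_2^2\otimes\nabla e_2 = 0$, i.e. $e_1^2\,\omega_1^1 + e_2^2\,\omega_2^1 = 0$ and $e_1^2\,\omega_1^2 + e_2^2\,\omega_2^2=0$ as identities among symmetric $2$-vectors; comparing the $e_1^2$, $e_1 e_2$, $e_2^2$ components forces $\omega_1^1=\omega_2^1=\omega_1^2=\omega_2^2=0$ in this frame, so $\nabla$ is the flat connection for which $e_1,e_2$ are parallel — except that the $e_i$ are generally not a coordinate frame, so torsion must be re-examined. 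The honest version keeps the frame a \emph{coordinate} frame and solves the $\Gamma$ system directly; the determinant that must be shown nonzero is (up to a nonzero factor) a power of $\Delta(\sigma_{3,A})$. Uniqueness is then immediate from the solvability argument, and I would present it by noting that the difference of two such connections is a torsion-free tensor field annihilating $\sigma_{3,A}$, which the rank computation shows must vanish.

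For part (2), flatness, the slick route is: once $\nabla$ is the connection for which a (local, possibly non-coordinate) frame $e_1,e_2$ built from the roots of $\sigma_{3,A}$ is parallel, its curvature is $R(e_i,e_j)e_k = 0$ on that frame, hence $R\equiv 0$. If instead one stays with coordinate frames, I would argue that $\nabla$ constructed above, being canonically associated to $\sigma_{3,A}$, is preserved by any local symmetry of $\sigma_{3,A}$; a regular cubic symbol on a surface has, locally, a symmetry pseudogroup large enough (three commuting ``scaling'' directions along the three root fields, or equivalently flat coordinates in which $\sigma_{3,A}$ is constant) to force the curvature, itself a $\nabla$-natural tensor, to be invariant — and an invariant tensor of the right valence under that pseudogroup must be zero. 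Cleanest of all: in the flat coordinates adapted to the three roots, $\sigma_{3,A}$ has constant coefficients, the $\Gamma$'s solving the parallelism system come out identically zero, so $R=0$ on the nose; then invariance of the construction transports this to arbitrary coordinates.

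The main obstacle I anticipate is the reality/framing issue in the hyperbolic-versus-ultrahyperbolic cases: writing $\sigma_{3,A}$ in the clean diagonal form $e_1^3+e_2^3$ requires complex roots in the ultrahyperbolic case, and the frame $e_i$ one extracts from the root fields need not be holonomic, so the cheap ``parallel frame $\Rightarrow$ flat'' argument does not directly yield a \emph{torsion-free} connection. The real work is to carry out the linear-algebra solve for the $6$ Christoffel symbols in a genuine coordinate frame (using the normal forms \eqref{Smblh}, \eqref{Smblu}), to identify the relevant determinant with $\Delta(\sigma_{3,A})$ so that regularity gives existence-and-uniqueness simultaneously, and then to read off $R=0$ from the explicit $\Gamma$'s in flat coordinates.
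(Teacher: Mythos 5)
There is a genuine gap, and it is the torsion-freeness assumption that runs through your ``honest version''. The theorem asserts existence and uniqueness of a \emph{linear} connection with $\nabla\sigma_{3,A}=0$, with no symmetry requirement, and in fact the Wagner connection is generically \emph{not} torsion-free: the paper records its nonzero torsion form explicitly (Corollary \ref{Crllr2}), and in the listed Christoffel symbols for the normal form \eqref{Smblh} one has $\Gamma^1_{12}\neq 0=\Gamma^1_{21}$. The correct linear-algebra count is therefore $8$ unknowns $\Gamma^k_{ij}$ against $8$ equations: for each fixed lower index $\ell$, the map $B\mapsto B\cdot\sigma_{3,A}$ is the $\mathfrak{gl}_2$-action on binary cubics, $\mathrm{4\to 4}$, and it is injective exactly when the cubic has distinct roots (trivial infinitesimal stabilizer, i.e.\ $\Delta(\sigma_{3,A})\neq0$); this gives existence \emph{and} uniqueness among all connections at once. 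Your setup of ``$8$ equations for $6$ unknowns $\Gamma^k_{ij}=\Gamma^k_{ji}$'' is overdetermined and generically has \emph{no} solution, so it cannot prove existence; and your uniqueness argument (``the difference of two such connections is a torsion-free tensor field'') only excludes other symmetric connections, not all connections. For the same reason, the ``cleanest of all'' flatness argument fails: coordinates in which $\sigma_{3,A}$ has constant coefficients do not exist in general (if they did, the Γ's would vanish and the torsion form would be zero, contradicting Corollary \ref{Crllr2}), so you cannot read off $R=0$ from identically vanishing Christoffels.

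Ironically, the part of your proposal you set aside as ``cheap'' is the one that works and needs no repair: in a frame $e_1,e_2$ canonically attached to the root decomposition of $\sigma_{3,A}$ (e.g.\ $\sigma_{3,A}=e_1^3+e_2^3$ where that is available, or more generally a frame normalized by the three root directions), the condition $\nabla\sigma_{3,A}=0$ forces all connection forms $\omega_i^k$ to vanish, so $\nabla$ is the unique connection making that frame parallel, and a connection admitting a parallel frame has zero curvature --- non-holonomicity of the frame is not an obstacle, it merely produces the torsion that the theorem allows and the paper computes. Note also that the paper itself does not reprove the theorem: it cites Wagner and \cite{LY3} and substantiates the statement by exhibiting the explicit (torsionful) Christoffel coefficients in the normal forms \eqref{Smblh} and \eqref{Smblu}, which is consistent with the frame-based construction but not with a torsion-free one. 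Dropping the symmetry requirement and redoing your rank computation on the full $8\times 8$ system (or, equivalently, keeping the frame argument and handling the hyperbolic case by complexification or by working with the three real root fields directly) would close the gap.
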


We call this connection {\it Wagner's connection}.

 In the case of hyperbolic symbol, we choose local coordinates in $M$ such that $\sigma$ has form \eqref{Smblh}. Then the non zero Christoffel coefficients $\Gamma^i_{jk}$ of the Wagner connection are the following:
\begin{gather*}
 \Gamma^1_{11}=\frac{1}{3}\big(\ln\frac{b}{a^2}\big)_{x_1},\quad
 \Gamma^2_{22}=\frac{1}{3}\big(\ln\frac{a}{b^2}\big)_{x_2},\\
 \Gamma^1_{12}=\frac{1}{3}\big(\ln\frac{b}{a^2}\big)_{x_2},\quad
 \Gamma^2_{21}=\frac{1}{3}\big(\ln\frac{a}{b^2}\big)_{x_1}.
\end{gather*}
In the ultrahyperbolic case \eqref{Smblu} we have the following nonzero Chritoffel coefficients:
\begin{gather*}
 \Gamma^1_{12}=\Gamma^2_{22}=-\frac{1}{6}\big(\ln(a^2+b^2)\big)_{x_2},\\
 \Gamma^1_{21}=-\Gamma^2_{11}=\frac{a_{x_1}b-ab_{x_1}}{a^2+b^2},\quad
 \Gamma^1_{22}=-\Gamma^2_{12}=\frac{ab_{x_2}-a_{x_2}b}{a^2+b^2}.
\end{gather*}

\begin{corollary}\label{Crllr2}\phantom{1}
The torsion form $\theta$ of the Wagner connection is 
$$
\theta=\frac{1}{3}\big(\ln\frac{b^2}{a}\big)_{x_1}dx_1+\frac{1}{3}\big(\ln\frac{a^2}{b}\big)_{x_2}dx_2
$$
for the hyperbolic case and
$$
\theta=\frac{ab_{x_2}-a_{x_2}b}{a^2+b^2}dx_1+\frac{ab_{x_1}-a_{x_1}b}{a^2+b^2}dx_2-\frac{1}{6}d\big(\ln(a^2+b^2)\big)
$$
for the ultrahyperbolic case.
\end{corollary}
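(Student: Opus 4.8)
The plan is to read the torsion form directly off the Christoffel coefficients of the Wagner connection displayed above. Recall that for an affine connection with coefficients $\Gamma^k_{ij}$ the torsion tensor has components $T^k_{ij}=\Gamma^k_{ij}-\Gamma^k_{ji}$, and that on a surface the torsion carries no traceless part: it is completely recovered from the $1$-form $\theta$ with components $\theta_i=\sum_k T^k_{ik}=\sum_k\big(\Gamma^k_{ik}-\Gamma^k_{ki}\big)$, equivalently $T(X,Y)=\theta(Y)X-\theta(X)Y$. Since $\theta$ is the trace of a tensor it is a well-defined $1$-form on $M$, independent of coordinates, so it suffices to evaluate it in the coordinates that put $\sigma_{3,A}$ into the normal forms \eqref{Smblh} and \eqref{Smblu}.

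First I would treat the hyperbolic case. There only $\Gamma^1_{11},\Gamma^2_{22},\Gamma^1_{12},\Gamma^2_{21}$ are nonzero, so $\theta_1=\Gamma^2_{12}-\Gamma^2_{21}=-\Gamma^2_{21}$ and $\theta_2=\Gamma^1_{21}-\Gamma^1_{12}=-\Gamma^1_{12}$. Substituting the values $\Gamma^2_{21}=\frac13(\ln\frac{a}{b^2})_{x_1}$, $\Gamma^1_{12}=\frac13(\ln\frac{b}{a^2})_{x_2}$ and rewriting $-\ln\frac{a}{b^2}=\ln\frac{b^2}{a}$, $-\ln\frac{b}{a^2}=\ln\frac{a^2}{b}$ yields the first formula of the corollary at once.

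For the ultrahyperbolic case one proceeds identically, now with the six nonzero coefficients of the second list. The computation is the same in spirit but needs more care: one must keep track of exactly which coefficients vanish, of the signs hidden in the relations $\Gamma^1_{21}=-\Gamma^2_{11}$ and $\Gamma^1_{22}=-\Gamma^2_{12}$, and then organise the result into two natural pieces — a rotational part built from the terms $\frac{ab_{x_i}-a_{x_i}b}{a^2+b^2}$ (which, up to a Hodge rotation of the standard metric, is $d\arg(a+\mathrm{i}b)$) and a conformal part built from the terms $-\frac16(\ln(a^2+b^2))_{x_j}$, which assemble into the exact form $-\frac16\,d\big(\ln(a^2+b^2)\big)$. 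Collecting these pieces gives the second formula.

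The conceptual content is minimal; the only real work is the ultrahyperbolic substitution, which is routine but error-prone — one has to get every sign and the two coordinate roles right and recognise the exact-form summand $-\frac16\,d\ln(a^2+b^2)$ buried inside the resulting expression.
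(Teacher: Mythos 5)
Your method — read the torsion off the displayed Christoffel coefficients and encode it, as one may on a surface, by the trace one-form — is exactly the (implicit) argument behind the corollary, and your hyperbolic computation is correct with the convention $\theta_i=\sum_k\big(\Gamma^k_{ik}-\Gamma^k_{ki}\big)$. (Note in passing that this convention and your other formula $T(X,Y)=\theta(Y)X-\theta(X)Y$ are \emph{not} equivalent: in dimension $2$ the trace of that tensor is $-\theta_i$, so you have a sign ambiguity built into your setup from the start.)

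The genuine gap is the ultrahyperbolic case, which you assert rather than carry out, and which does \emph{not} come out as claimed if one substitutes the listed data. With the printed nonzero coefficients $\Gamma^1_{12}=\Gamma^2_{22}=-\tfrac16\big(\ln(a^2+b^2)\big)_{x_2}$, $\Gamma^1_{21}=-\Gamma^2_{11}$, $\Gamma^1_{22}=-\Gamma^2_{12}$, one has $\Gamma^1_{11}=\Gamma^2_{21}=0$, so no $x_1$-derivative of $\ln(a^2+b^2)$ can appear in any contraction of $T^k_{ij}=\Gamma^k_{ij}-\Gamma^k_{ji}$; your own recipe yields
\begin{equation*}
\theta_1=\Gamma^2_{12}-\Gamma^2_{21}=-\frac{ab_{x_2}-a_{x_2}b}{a^2+b^2},\qquad
\theta_2=\Gamma^1_{21}-\Gamma^1_{12}=\frac{a_{x_1}b-ab_{x_1}}{a^2+b^2}+\frac16\big(\ln(a^2+b^2)\big)_{x_2},
\end{equation*}
which is not the corollary's expression: the term $-\tfrac16\big(\ln(a^2+b^2)\big)_{x_1}dx_1$ is absent and the remaining components have the opposite sign relative to the convention you validated in the hyperbolic case. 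To recover the stated formula one needs, in addition, the coefficients $\Gamma^1_{11}=\Gamma^2_{21}=-\tfrac16\big(\ln(a^2+b^2)\big)_{x_1}$, which the symmetry $x_1\leftrightarrow x_2$, $a\leftrightarrow b$ of the normal form \eqref{Smblu} indicates were omitted from the displayed list, and one must also resolve the sign convention consistently across the two cases (as printed, the convention matching the hyperbolic formula is opposite to the one matching the ultrahyperbolic formula). A complete proof therefore cannot simply ``proceed identically'': it must either rederive the Wagner connection from the condition $\nabla\sigma_{3,A}=0$ for the normal form \eqref{Smblu} to pin down all Christoffel coefficients, or at least supply the missing ones and fix one torsion-form convention — precisely the step your proposal waves through as routine.
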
 

\subsection{Symols and quantization} 

Let $\Sigma^{\cdot} = \oplus_{k\ge 0}\Sigma^k(M)$ be the gra\-ded algebra of symmetric differential forms and let $\nabla$ be the Wagner connection associated with a regular symbol from $\Sigma_3(M)$. Then the covariant differential
$$
d_{\nabla}:\Omega^1(M)\longrightarrow \Omega^1(M)\otimes\Omega^1(M)
$$
define derivation 
$$
 d_{\nabla}^s\colon\Sigma^{\cdot}\longrightarrow\Sigma^{\cdot +1} 
$$
of degree one in graded symmetric algebra $\Sigma^{\cdot} \oplus_{k\ge 0}\Sigma^k(M)$. Namely, this derivation is defined by its action on generators, and we have
\begin{align*}
&d^s_{\nabla}= d : C^{\infty}(M)\longrightarrow\Omega^1(M) = \Sigma^1,\\
&d^s_{\nabla}:\Omega^1(M)= \Sigma^1\stackrel{d_{\nabla}}{\longrightarrow}\Omega^1(M)\otimes\Omega^1(M)\stackrel{\mathrm{Sym}}{\longrightarrow}\Sigma^2.
\end{align*}

Let now $\alpha_k\in\Sigma_k(M)$. We define a differential operator 
$\mathcal Q(\alpha_k)\in\mathit{Diff}_k(M)$ as follows:
$$ 
 \mathcal Q(\alpha_k)(h)\stackrel{\mathrm{def}}{=}\frac{1}{k!}\left\langle\,\alpha_k,\,\big(d^s_{\nabla}\big)^k(h)\,\right\rangle,
$$ 
where $h\in C^{\infty}(M)$, $\big(d^s_{\nabla}\big)^k(h)\in\Sigma^k(M)$, and $\langle\cdot\, ,\cdot\rangle$ is the standard convolution
$$
  \Sigma_k(M)\otimes\Sigma^k(M)\longrightarrow C^{\infty}(M).
$$

Remark, that the value of the symbol of the derivation $d^s_{\nabla}$
at a covector $\theta$ equals to the symmetric product by $\theta$ into the module $\Sigma^{\cdot}$.  Therefore, the symbol of operator $ \mathcal Q(\alpha_k)$ equals $\alpha_k$ as the symbol of a composition of operators equals the composition of symbols.

We call differential operator $ \mathcal Q(\alpha_k)$ a {\it quantization of symbol $\alpha_k$}.

Let now $A\in\mathit{Diff}_3 (M)$ and $\sigma_{3, A}$ be its symbol. Then operator
$$
A-\mathcal Q\big(\sigma_{3, A}\big)
$$
has order $2$, and let $\sigma_{2,A}$ be its symbol.

Then operator $A- \mathcal Q\big(\sigma_{3, A}\big)-\mathcal Q\big(\sigma_{2,A}\big)$ has order $1$ and let $\sigma_{1, A}$ be its symbol. Thus we get {\it subsymbols} $\sigma_{i, A}\in\Sigma_i(M)$, $0\le i\le 2$, such that
$$
A = \mathcal Q\big(\sigma_{(3)}(A)\big),
$$
where
\begin{equation}\label{Subsymbol} 
\sigma_{(3)}(A) =\sigma_{3, A}+\sigma_{2, A}+\sigma_{1, A} +\sigma_{0, A}
\end{equation} 
is the total symbol
and 
$$
\mathcal{Q}\big(\sigma_{(3)}(A)\big) = \mathcal Q\big(\sigma_{3, A}\big) +\mathcal Q\big(\sigma_{2, A}\big)+\ldots +\mathcal Q\big(\sigma_{0, A}\big).
$$

\subsubsection{Coordinates}

Let $x_1, x_2$ be local coordinates in a neighborhood $\mathcal O\subset M$, where the symbol $\sigma_{3, A}$ is regular. Denote by $x_1, x_2, w_1,w_2$ induced standard coordinates in the tangent bundle over $\mathcal O$.

Then $d_{\nabla}(dx_k) = -\sum\Gamma^k_{ij}dx_i\otimes dx_j$, where $\Gamma^k_{ij}$ are the Christoffel symbols of the Wagner connection $\nabla$.

Thus, in coordinates $x,w$ we have $d^s_{\nabla}(w_k) =-\sum \Gamma^k_{ij}w_iw_j$  
and the derivation $d^s_{\nabla}$ has the form:
$$
 d^s_{\nabla} =\sum w_i\partial_{x_i}-\sum\Gamma^k_{ij}w_iw_j\partial_{w_k}.
$$

\subsection{Universal differential operator} We define a total operator of third order
$$
\square_3\colon C^{\infty}(J^k\chi_3)\longrightarrow C^{\infty}(J^{k+3}\chi_3)
$$
as it was done in \cite{LY3}.

In local coordinates this operator has the form
\begin{equation}\label{UnvsTtlOprtr}
\square_3=6\sum_{\alpha, 0\le|\alpha|\le 3}\frac{u^{\alpha}}{\alpha !}\Big(\frac{d}{dx}\Big)^{\alpha},
\end{equation}
where $\alpha=(\alpha_1, \alpha_2)$, $|\alpha|=\alpha_1+\alpha_2$, $(d/dx)^{\alpha}=(d/dx_1)^{\alpha_1}(d/dx_2)^{\alpha_2}$, and $d/dx_1, d/dx_2$ are total derivatives.

\begin{theorem}\phantom{1}
  The operator $\square_3$ commutes with the action of the group $\mathcal{G}(M)$ on the jet bundles.
\end{theorem}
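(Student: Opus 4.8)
The plan is to reduce the statement to the coordinate-independence of the local formula \eqref{UnvsTtlOprtr} and then to verify that independence from the transformation rules for total derivatives and for the fibre coordinates of $\chi_3$. For an operator of the shape \eqref{UnvsTtlOprtr} --- a differential operator in the total derivatives $d/dx_i$ whose coefficients are functions on $J^{\bullet}\chi_3$ --- commuting with the action of $\mathcal G(M)$ on the jet bundles is equivalent to the assertion that, whenever $x$ and $\tilde x=\phi(x)$ are two charts on $M$ related by $\phi\in\mathcal G(M)$, the operator written in the $x$-chart by \eqref{UnvsTtlOprtr} agrees, after the canonical identifications, with the operator written in the $\tilde x$-chart by the same formula (with $u^{\alpha}$ and $d/dx_i$ replaced by $\tilde u^{\beta}$ and $d/d\tilde x_i$). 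Indeed, the action of $\phi$ on $J^k\chi_3$ induced from its action on $\mathbf{Diff}_3$ is precisely the change of the standard chart produced by $\phi$, so equivariance of $\square_3$ is nothing but chart-independence of \eqref{UnvsTtlOprtr}.

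Next I would record the two inputs. First, on $J^{\bullet}\chi_3$ the total derivatives obey the chain rule under a change of the base coordinate, $d/dx_j=\sum_i(\partial\tilde x_i/\partial x_j)\,d/d\tilde x_i$, exactly as the partial derivatives $\partial/\partial x_j$ do; this is the naturality of the total-derivative (Cartan) structure on jet spaces recalled in \cite{LY3}, and it is the only external fact about jets needed. Second, a point of $\mathbf{Diff}_3$ over $p\in M$ is an honest third order operator germ on which $\mathcal G(M)$ acts by $\phi_\ast(A)=\phi_\ast\circ A\circ\phi_\ast^{-1}$; hence the fibre coordinates $u^{\alpha}$, $0\le|\alpha|\le3$, read in the chart $\tilde x$ are obtained from those read in the chart $x$ by the classical change-of-variables law for the coefficients of a linear third order operator, which is itself produced by iterating the chain rule above to order $\le 3$.

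Then I would substitute the first identity into \eqref{UnvsTtlOprtr}, expand the iterated total derivatives $(d/dx)^{\alpha}$ by Fa\`a di Bruno's formula as combinations of $(d/d\tilde x)^{\beta}$ with $|\beta|\le|\alpha|$ and coefficients polynomial in the derivatives of $\phi$, regroup by the multi-index $\beta$, and check that the coefficient of $(d/d\tilde x)^{\beta}$ equals exactly $6\tilde u^{\beta}/\beta!$. The top-order part $|\beta|=3$ is immediate, since there $6/\alpha!$ is the multinomial coefficient $\binom{3}{\alpha}$ and the symbol transforms tensorially. For $|\beta|<3$ one must see that the ``correction'' terms arising when total derivatives hit the Jacobian factors (the terms carrying $\partial^{2}\tilde x$ and $\partial^{3}\tilde x$) combine with the images of the higher $u^{\alpha}$'s to reproduce precisely the lower-order part of the change-of-variables law, the normalising constants $6/\alpha!$ being consistent with this. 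This finite but somewhat delicate Fa\`a di Bruno bookkeeping is the computational core of the proof, and I expect it to be the main obstacle: $6/\alpha!$ is a genuine multinomial coefficient only in top order, so the cancellations in orders $2,1,0$ cannot be read off from the symbol and have to be tracked through the whole expansion.

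A cleaner, equivalent route that sidesteps the combinatorics is to give $\square_3$ a coordinate-free definition --- as the operator canonically built from the tautological section of $\chi_3$ and the flat Cartan connection on $J^{\bullet}\chi_3$, with the operator's coefficients frozen as jet-fibre functions and its derivatives replaced by total derivatives --- whence $\mathcal G(M)$-equivariance is automatic because every ingredient of the construction is natural, and \eqref{UnvsTtlOprtr} is recovered merely as the coordinate expression of this object. I would write up the proof in whichever of these two equivalent forms is closest to the treatment in \cite{LY3}.
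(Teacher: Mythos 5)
You should note first that the paper itself contains no proof of this theorem: $\square_3$ is introduced ``as it was done in \cite{LY3}'' and the equivariance is quoted from that construction, where it is immediate because the operator is defined invariantly. Your ``cleaner, equivalent route'' is therefore essentially the intended argument, and it is the one worth writing out. The crisp way to make it precise is through the characterizing identity
$\square_3(F)\bigl([A]^{k+3}_x\bigr)=A\bigl(F\circ j_k(A)\bigr)(x)$
for every operator $A\in\mathit{Diff}_3(M)$, every $F\in C^{\infty}(\mathbf{J}^k(\chi_3))$ and every $x\in M$: the total derivatives in \eqref{UnvsTtlOprtr} are exactly what implements ``apply the operator whose coefficients are read off from the $0$-jet to $F\circ j_k(A)$'', and the constants $6/\alpha!$ merely encode the chosen normalization of the fibre coordinates $u^{\alpha}$ on $\mathbf{Diff}_3$. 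From this identity equivariance is a two-line computation, using $\phi_*(A)(\phi_*h)=\phi_*\bigl(A(h)\bigr)$ and $j_k(\phi_*A)\circ\phi=\phi^{(k)}\circ j_k(A)$; no Fa\`a di Bruno bookkeeping enters at all. If you take this route, do state the tautological construction explicitly in this form and record the convention tying $u^{\alpha}$ to the coefficients of $A$, since that convention is what fixes the constants in \eqref{UnvsTtlOprtr}.

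Your first, computational route is, as written, only a plan and has a genuine gap exactly where you flag it: for $|\beta|\le 2$ the assertion that, after substituting the chain rule for total derivatives and the push-forward law for the coefficients, the coefficient of $(d/d\tilde x)^{\beta}$ is again $6\tilde u^{\beta}/\beta!$ \emph{is} the theorem in that formulation, and you leave it unverified. Moreover the check cannot even be set up until you fix how the $u^{\alpha}$ are normalized against the coefficients of the operator (for $|\alpha|=2$ the factors $6/\alpha!$ are $3,6,3$, not the coefficients $1,2,1$ appearing in \eqref{A1}), so ``the classical change-of-variables law'' you invoke for the $u^{\alpha}$ is not yet pinned down in your text. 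The shortest way to close this route is to observe that both chart expressions compute $A\bigl(F\circ j_k(A)\bigr)$ at the given point, which collapses the bookkeeping back to the invariant argument above; if instead you insist on the direct expansion, the low-order cancellation must actually be carried out, not asserted.
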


\subsection{Natural differential invariants of regular operators} 

By natural differential invariants of order k we mean a function on $J^k(\chi_3)$ which are $\mathcal{G}(M)$-invariant and rational along fibers of the projection $\chi_{3, k}$.

\begin{theorem}
  If $I$ is a natural differential invariant of order $\le k$ for differential operators of the third order, then $\square_3(I)$ is a  natural differential invariant of the order $\le (k+3)$ for these operators.
\end{theorem}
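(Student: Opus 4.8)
The plan is to derive the claim directly from the two facts already in hand: that $\square_3$ is a well-defined (intrinsic) total operator $C^{\infty}(J^{k}\chi_3)\to C^{\infty}(J^{k+3}\chi_3)$, and that it commutes with the prolonged action of $\mathcal{G}(M)$ on the jet bundles. The order bound is then automatic from the mapping property of $\square_3$, and only two things remain to verify for $\square_3(I)$: $\mathcal{G}(M)$-invariance and rationality along the fibres of $\chi_{3,k+3}$.

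First I would settle invariance. Fix $\phi\in\mathcal{G}(M)$ and write $\phi^{(m)}$ for its prolongation to $J^{m}(\chi_3)$. The commutation theorem gives $(\phi^{(k+3)})^{*}\big(\square_3 I\big)=\square_3\big((\phi^{(k)})^{*}I\big)$; since $I$ is a natural invariant of order $\le k$, we have $(\phi^{(k)})^{*}I=I$, and therefore $(\phi^{(k+3)})^{*}\big(\square_3 I\big)=\square_3 I$. As $\phi$ is arbitrary and the whole argument is local, this exhibits $\square_3(I)$ as $\mathcal{G}(M)$-invariant.

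Next I would check rationality along the fibres using the local form \eqref{UnvsTtlOprtr}. There $\square_3$ is a finite combination of compositions of the total derivatives $d/dx_1, d/dx_2$ followed by multiplication by the jet coordinates $u^{\alpha}$, which are polynomial along the fibres. In standard jet coordinates a total derivative has the shape $d/dx_i=\partial_{x_i}+\sum_{\beta}u^{\beta+e_i}\,\partial_{u^{\beta}}$, and each of $\partial_{x_i}$, $\partial_{u^{\beta}}$, and multiplication by the polynomial $u^{\beta+e_i}$ carries functions rational in the fibre variables to functions rational in the fibre variables (of jet order one higher). Hence $\square_3$ preserves the class of functions rational along the fibres, so $\square_3(I)$ is rational along the fibres of $\chi_{3,k+3}$. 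Together with the previous paragraph this makes $\square_3(I)$ a natural differential invariant of order $\le k+3$.

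The hard part is essentially bookkeeping rather than computation: one must keep the intrinsic description of $\square_3$ (which is what lets the commutation theorem apply and makes the invariance conclusion chart-independent) in sync with its coordinate expression (which is what makes the rationality check transparent), and track how the jet order increases under the total derivatives. The computational core — that total derivatives and multiplication by jet coordinates preserve rationality in the fibre variables — is routine.
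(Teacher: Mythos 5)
Your argument is correct and follows exactly the route the paper intends (the paper states this theorem without proof, relying on the preceding commutation theorem for $\square_3$ and the coordinate form \eqref{UnvsTtlOprtr}): invariance of $\square_3(I)$ from equivariance of $\square_3$ plus invariance of $I$, and rationality along fibres because total derivatives and multiplication by the jet coordinates $u^{\alpha}$ preserve fibrewise rationality while raising the jet order by at most three. No gaps.
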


We say that two natural differential invariants $I_1, I_2$ are in {\it general position} if
\begin{equation}\label{GnrlPstn}
  \hat d I_1\wedge \hat d I_2\neq 0,
\end{equation}
where $\hat d$ is the total differential, and denote by $\mathcal O(I_1, I_2)\subset J^{\infty}(\chi_3)$ the open domain, where condition \eqref{GnrlPstn} holds.
\begin{theorem} Let  natural differential invariants $I_1, I_2$ are in general position and let
$$
  J^{\alpha}=\square_3(I^{\alpha}),
$$  
where $\alpha=(\alpha_1, \alpha_2)$ and $0\le|\alpha|\le 3$.

Then the field of natural differential invariants for differential operators of the third order in the domain $\mathcal O(I_1, I_2)$ is generated by invariants $I_1, I_2, J^{\alpha}$ and all their Tresse derivatives
$$
\frac{d^lJ^{\alpha}}{dI_1^{l_1}dI_2^{l_2}},
$$
where $l=l_1+l_2$.
\end{theorem}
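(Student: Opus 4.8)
The plan is to feed the two structural facts already in hand --- the $\mathcal G(M)$-equivariance of the universal operator $\square_3$, and the fact that general position of $I_1,I_2$ turns $\hat dI_1,\hat dI_2$ into a horizontal coframe on $\mathcal O(I_1,I_2)$ --- into the now standard Lie--Tresse generation scheme, following the method of \cite{LY2w}. First I would note that on $\mathcal O(I_1,I_2)$ the inequality $\hat dI_1\wedge\hat dI_2\neq0$ lets one define the total derivations $d/dI_1$ and $d/dI_2$ dual to the coframe $\hat dI_1,\hat dI_2$; since these derivations are assembled from total derivatives and the invariant functions $I_1,I_2$, they commute with the action of $\mathcal G(M)$ and hence carry natural differential invariants to natural differential invariants. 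Combined with the preceding theorem on $\square_3$, this already shows that $I_1,I_2$, the invariants $J^{\alpha}=\square_3(I^{\alpha})$, and all their Tresse derivatives $d^{l}J^{\alpha}/(dI_1^{l_1}dI_2^{l_2})$ are natural differential invariants; what remains is to prove they exhaust the field.

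Next I would record the relevant dimension count. On the domain in question the prolonged action of $\mathcal G(M)$ on $J^{k}(\chi_3)$ stabilizes: for $k$ large the codimension of a generic orbit in the fibre of $\chi_{3,k}$ is a fixed polynomial in $k$, equal to the fibre dimension of $\chi_{3,k}$ minus the dimension of the group of $(k{+}3)$-jets of diffeomorphisms --- the shift by $3$ appearing because a third order operator transforms through third derivatives of the diffeomorphism, so the action on $k$-jets of $A$ factors through $(k{+}3)$-jets of $\phi$. Hence the number of functionally independent natural differential invariants of order $\le k$, and the number of genuinely new ones produced when the order grows by one, are known in advance; the task is to match this count with the invariants on our list.

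The heart of the argument, and the step I expect to be the main obstacle, is to show that our list meets that count. Taking $I_1,I_2$ as the two ``independent variables'' of an invariant coordinate system on the orbit space, one must verify that at each order $k$ the order-$k$ Tresse derivatives of the $J^{\alpha}$ express the new order-$k$ jet coordinates rationally in terms of invariants of lower order --- equivalently, that modulo lower order invariants the differentials of these Tresse derivatives span a complement to the tangent space of the generic orbit in $J^{k}(\chi_3)$. Here the explicit shape \eqref{UnvsTtlOprtr} of $\square_3$ is essential: its principal part $6\sum_{|\alpha|=3}(u^{\alpha}/\alpha!)(d/dx)^{\alpha}$ is a genuine third order total operator whose symbol is the cubic $\sigma_{3,A}$, regular on the relevant domain, so $\square_3(I_i)$ is affine of degree one in the top jet variables with nondegenerate leading part, and iterating the coframe derivations $d/dI_1,d/dI_2$ triangularly sweeps out all higher jet directions. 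Carrying this out is exactly the reduced Gr\"obner basis computation announced in the introduction: one computes a reduced Gr\"obner basis of the ideal of relations among $\square_3(I_1),\square_3(I_2)$ and their derivatives, checks that the leading terms already generate, and then the prolongation formalism propagates generation from the first stabilized order to all higher orders.

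Assembling the pieces yields the theorem: equivariance gives ``invariants in, invariants out''; the stabilization count fixes the target dimensions; the symbol/Gr\"obner computation shows that in each order the listed invariants solve for the new jet coordinates in terms of lower order ones; and an induction on $k$ then expresses every natural differential invariant of order $\le k$ as a rational function of $I_1,I_2$, the $J^{\alpha}$, and their Tresse derivatives on $\mathcal O(I_1,I_2)$. Syzygies among the Tresse derivatives are harmless, since at each order it is enough that some subset of them suffices to solve for the new jet coordinates, which is what the Gr\"obner computation provides.
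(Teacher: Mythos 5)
There is a genuine gap at the heart of your argument, and it is worth noting first that the paper itself offers no proof of this theorem: it is recalled from \cite{LY3} and rests on the global Lie--Tresse theorem \cite{KL}, which the paper invokes precisely to guarantee that rational natural invariants separate regular orbits and that the field is finitely generated by invariants and invariant derivations. Your general scheme (equivariance of $\square_3$, Tresse derivations dual to $\hat dI_1,\hat dI_2$, stabilization count, induction on order) is the right ambient framework, but the decisive step --- showing that the \emph{listed} invariants generate --- is not actually carried out. The point of the theorem is that one needs \emph{all ten} invariants $J^{\alpha}=\square_3(I^{\alpha})$ with $I^{\alpha}=I_1^{\alpha_1}I_2^{\alpha_2}$, $0\le|\alpha|\le 3$: expanding each by \eqref{UnvsTtlOprtr} gives $J^{\alpha}=6\sum_{\beta}\frac{u^{\beta}}{\beta!}\,(d/dx)^{\beta}(I^{\alpha})$, a linear system in the ten fibre coordinates $u^{\beta}$ whose matrix, rewritten in the Tresse frame determined by $I_1,I_2$, is generically nondegenerate on $\mathcal O(I_1,I_2)$; solving it expresses every coefficient $u^{\beta}$ rationally through $I_1,I_2$ and the $J^{\alpha}$, and then Tresse differentiation recovers all higher jet coordinates, which is what makes the induction on $k$ close. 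Your text only ever uses $\square_3(I_1)$ and $\square_3(I_2)$ and asserts that iterated coframe derivations ``sweep out'' the jet directions; that cannot work, since two scalar invariants per order do not determine the ten zero-order coefficients, and no nondegeneracy is verified.

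A second, related misstep: you outsource the verification to ``the reduced Gr\"obner basis computation announced in the introduction.'' In this paper the Gr\"obner basis machinery plays no role in the present theorem; it appears only later, in the descent procedure, where the coefficients of the reduced Gr\"obner basis of the ideal of relations among $I_0,\dots,I_N$ over $\mathbf{Q}_0$ yield invariants of weakly nonlinear operators. Using it here both misattributes the tool and leaves the actual nondegeneracy (the solvability of the linear system above, together with the separation of regular orbits supplied by \cite{KL}) unproved, which is exactly the content the theorem asks for.
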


Thus, to obtain the field of natural differential invariants of linear regular differential operators of order 3, it is enough to have two natural differential invariants $I_1$, $I_2$ in general position. 

The invariants can be obtained by various methods. 

For example, as the invariant $I_1$ one can take the free term $u_0$ of the universal operator $\square_3$, or $I_1$ is natural invariant such that $I_1(A)$ is the natural convolution of the torsion form $\theta$ of the Wagner connection and the subsymbol $\sigma_{1, A}$.

As the natural invariant $I_2$ one can take, for example, the invariant $\square_3(I_1)$.

\subsubsection{} 

Let $\mathbf{F}_k^{(3)}$ be the field of all natural differential invariants of order $\le k$ of linear scalar differential operators of order $\le 3$ on $M$.

The $\mathcal{G}(M)$-action on $M$  is transitive and, therefore, the set of all such invariants forms an $\mathbb{R}$-field $\mathbf{F} _{k}^{(3)}$.

The natural projections $\chi _{k,l}\colon\mathbf{J}^k(\chi)\rightarrow\mathbf{J}^l(\chi),\;  k>l$, define the embeddings $\mathbf{F}_l^{(3)}\hookrightarrow \mathbf{F}_k^{(3)}$ and their inductive limit $\mathbf{F}_{*}^{(3)}$ is the field of natural differential invariants of linear differential operators on the manifold $M$.

Remark that the $\mathcal{G}(M)$-action on differential operators satisfies the conditions of the Lie-Tresse theorem (see \cite{KL}) and, therefore, the field $\mathbf F_*^{(3)}$ separates regular $\mathcal G(M)$-orbits.

\section{Linear differential operators on $M\times\mathbb{R}$} 

Consider now linear differential operators $A$ of the third order on the manifold 
$\mathbf{J}^{0}\left( \pi \right) =M\times \mathbb{R}$, that satisfy the
following condition:
\begin{equation*}
\big(A-A(1)\big)(y) =0,
\end{equation*}
where $y$ is the fibrewise coordinate on $\mathbf{J}^{0}(\pi)$.

In local coordinates $(x_1, x_2)$ on $M,$ we have
representation (\ref{A}) of operators of such type:
\begin{multline*}
A=\sum\limits_{i,j,k}a_{ijk}\left( x,y\right) \frac{\partial ^{3}}{\partial
x_{i}\partial x_{j}\partial x_{k}}+\sum\limits_{i,j}a_{ij}\left( x,y\right) \frac{\partial ^{2}}{\partial
x_{i}\partial x_{j}}\\+\sum\limits_{i}a_{i}\left( x,y\right) \frac{\partial }{
\partial x_{i}}+a_{0}\left( x,y\right).
\end{multline*}

The module of these operators we will denote by $\mathit{Diff}_{3}\left( M, \mathbb{R}\right)$ and by 
$$
  \mathbb{\zeta}\colon\mathbf{Diff}_3\left( M, \mathbb{R}\right)\longrightarrow M\times\mathbb{R}
$$ 
we will denote the corresponding bundle of these operators.

As before, elements of the module of $\mathit{Diff}_{3}(M,\mathbb{R})$ are just sections of the bundle $\mathbb{\zeta }$ with the
correspondence 
$$
   \mathit{Diff}_{3}\left( M,\mathbb{R}\right)\equiv C^{\infty }(\mathbb{\zeta }),\quad A\mapsto s_A,
$$ 
where $s_{A}\left( a,y\right) =A_{\left( a,y\right) }$ and $\left( a,y\right) \in
M\times \mathbb{R}$.

The diffeomorphism group $\mathcal{G}(M) $ acts by prolongation 
$\phi \rightarrow \phi^{(0)}$ on the manifold $\mathbf{J}^0(\pi) =M\times \mathbb{R}$,  preserves the function $y$, and therefore acts in the bundle $\mathbb{\zeta }$ as well as in the $k$-jet
bundles $\mathbb{\zeta }_{k}\colon\mathbf{J}^{k}\left( \mathbb{\zeta }\right)
\rightarrow M\times \mathbb{R}$.

\subsection{Linear differential operators $\mathbf{A_f}$} Given an operator \linebreak $A\in \mathit{Diff}_3(M,\mathbb{R})$ and a function $f\in C^{\infty }(M)$ we define the operator 
$$
A_f = s_f^{\ast}\circ A\circ \pi ^{\ast}\in \mathit{Diff}_3(M) 
$$ 
as the operator that corresponds to the restriction of the section $s_{A}$ to the graph of the function $f$. 
Here $s_f\colon M\rightarrow M\times \mathbb{R}$ is the section of the bundle $%
\pi $ that corresponds to the function $f$.

In local coordinates, we get the above representation (\ref{A_f}) for this
type of operators:
\begin{multline*}
A_{f}=\sum\limits_{i,j,k}a_{ijk}\left( x,f\right) \frac{\partial
^{3}}{\partial x_{i}\partial x_{j}\partial x_{k}}+\sum\limits_{i,j}a_{ij}\left( x,f\right) \frac{\partial
^{2}}{\partial x_{i}\partial x_{j}}\\+\sum\limits_{i}a_{i}\left( x,f\right) 
\frac{\partial }{\partial x_{i}}+a_{0}\left( x,f\right).  
\end{multline*}

\subsection{Weakly nonlinear differential operators} Define now the\linebreak space of \textit{weakly nonlinear operators} of the third order $\mathit{Diff}_{3}^{w}(M)$ as the space of differential operators $A_w$ on $C^{\infty }\left( M\right) $ of the form 
\begin{equation}\label{WnOpr}
A_{w}\left( f\right)=A_{f}\left( f\right),
\end{equation}
where $A\in \mathit{Diff}_{3}\left( M,\mathbb{R}\right) $ and $f\in C^{\infty}(M)$.

\begin{proposition}\label{Prpstn7}
The mappings 
$$
\Theta\colon\mathit{Diff}_3(M,\mathbb{R})\times C^{\infty}(M)\longrightarrow\mathit{Diff}_3(M), \quad\Theta\colon(A,f)\mapsto A_f,
$$
and 
$$ 
\Theta _{w}\colon\mathit{Diff}_3(M,\mathbb{R})\longrightarrow\mathit{Diff}_{3}^{w}(M),\quad \Theta _{w}\colon A\mapsto A_w, 
$$ 
are natural in the following sense :
\begin{eqnarray*}
\phi _{\ast }\left( A,f\right) &=&\left( \phi _{\ast }^{\left( 0\right)
}\left( A\right) ,\phi _{\ast }\left( f\right) \right) , \\
\phi _{\ast }\left( A_{f}\right) &=&\big(\phi _{\ast }^{(0)}(A)\big) _{\phi _{\ast}(f)},
\end{eqnarray*}%
and%
\begin{equation*}
\phi _{\ast }(A_{w}) =\big(\phi _{\ast }^{(0)}(A)\big) _{w},
\end{equation*}%
for all diffeomorphisms $\phi \in \mathcal{G}\left( M\right)$.
\end{proposition}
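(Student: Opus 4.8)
The plan is to reduce the entire proposition to two elementary compatibility relations for the prolongation $\phi^{(0)}$ of a diffeomorphism $\phi\in\mathcal G(M)$ to $\mathbf J^0(\pi)=M\times\R$, and then to finish by a short diagram chase. Since $\phi^{(0)}$ preserves the fibre coordinate $y$, it acts by $\phi^{(0)}(x,y)=(\phi(x),y)$, and from this I would read off the two identities $\pi\circ\phi^{(0)}=\phi\circ\pi$ and $\phi^{(0)}\circ s_f=s_{\phi_*(f)}\circ\phi$, the latter because $s_{\phi_*(f)}(\phi(x))=(\phi(x),f(\phi^{-1}\phi(x)))=(\phi(x),f(x))=\phi^{(0)}(s_f(x))$. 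Dualizing, and using $\phi_*=(\phi^{-1})^*$, $\phi_*^{(0)}=((\phi^{(0)})^{-1})^*$ together with $(\phi^{(0)})^{-1}=(\phi^{-1})^{(0)}$, these turn into
$$
\pi^*\circ\phi_*^{-1}=(\phi_*^{(0)})^{-1}\circ\pi^*,\qquad
\phi_*\circ s_f^*=s_{\phi_*(f)}^*\circ\phi_*^{(0)}.
$$

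For the map $\Theta$, the first displayed relation $\phi_*(A,f)=(\phi_*^{(0)}(A),\phi_*(f))$ is nothing but the definition of the componentwise $\mathcal G(M)$-action on the product $\mathit{Diff}_3(M,\R)\times C^\infty(M)$, so there is nothing to prove there. For the second relation I would substitute the definition $A_f=s_f^*\circ A\circ\pi^*$ into $\phi_*(A_f)=\phi_*\circ A_f\circ\phi_*^{-1}$ and push the two relations above through:
\begin{multline*}
\phi_*(A_f)=\phi_*\circ s_f^*\circ A\circ\pi^*\circ\phi_*^{-1}
=\phi_*\circ s_f^*\circ A\circ(\phi_*^{(0)})^{-1}\circ\pi^*\\
=s_{\phi_*(f)}^*\circ\phi_*^{(0)}\circ A\circ(\phi_*^{(0)})^{-1}\circ\pi^*
=\big(\phi_*^{(0)}(A)\big)_{\phi_*(f)}.
\end{multline*}

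For $\Theta_w$ I would set $B=\phi_*^{(0)}(A)$, fix $g\in C^\infty(M)$ and put $f=\phi_*^{-1}(g)$. Applying the relation just proved gives $\phi_*(A_f)=B_{\phi_*(f)}=B_g$, i.e.\ $A_f=\phi_*^{-1}\circ B_g\circ\phi_*$, and then, using $A_w(h)=A_h(h)$ and $B_w(h)=B_h(h)$,
$$
\phi_*(A_w)(g)=\phi_*\big(A_w(\phi_*^{-1}g)\big)=\phi_*\big(A_f(f)\big)
=\phi_*\big(\phi_*^{-1}(B_g(\phi_* f))\big)=B_g(g)=B_w(g),
$$
whence $\phi_*(A_w)=\big(\phi_*^{(0)}(A)\big)_w$. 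I do not expect any genuine obstacle here: the content is entirely the two compatibility relations for $\phi^{(0)}$, which are immediate from the fact that $\phi^{(0)}$ preserves $y$, and the only thing demanding attention is the bookkeeping of the contravariant ($*$ versus pullback) conventions and of inverses when passing between $\phi$, $\phi^{(0)}$ and the associated substitution operators $\phi_*$, $\phi_*^{(0)}$.
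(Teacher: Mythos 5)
Your proof is correct: the two compatibility relations $\pi\circ\phi^{(0)}=\phi\circ\pi$ and $\phi^{(0)}\circ s_f=s_{\phi_*(f)}\circ\phi$, dualized with the conventions $\phi_*=(\phi^{-1})^*$ and $A_f=s_f^*\circ A\circ\pi^*$, give exactly the claimed identities, and the $\Theta_w$ case follows from the $\Theta$ case by evaluating at $f=\phi_*^{-1}(g)$ as you do. The paper itself gives no details here, referring instead to the verbatim analogous proof of Proposition~1 in \cite{LY2w}, and your direct verification is precisely that standard argument spelled out, so the approach is essentially the same.
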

\begin{proof} The proof is almost verbatim repetition of the proof of Proposition 1 in \cite{LY2w}.
\end{proof}

\section{Related pairs and their invariants}

The above proposition has the following consequences:

\begin{itemize}
\item By differential $\mathcal{G}(M)$-invariants of weakly nonlinear operators we will mean differential $\mathcal{G}(M)$-invariants of operators in\linebreak $\mathit{Diff}_3(M, \mathbb{R})$, that are $\mathcal{G}(M)$-invariant functions on jet spaces $\mathbf{J}^k(\mathbb{\zeta})$.
\item In what follows, we will require that operators $A\in \mathit{Diff}_3(M,\mathbb{R})$ under consideration are rational in $y$ (as sections of the bundle $\mathbb{\zeta}$). Also, by differential invariants of rational weakly nonlinear operators we mean $\mathcal{G}(M)$-invariant functions on jet spaces $\mathbf{J}^{k}(\mathbb{\zeta})$ that are rational along fibres of the projections 
$\pi\circ\mathbb{\zeta}_{k}\colon\mathbf{J}^{k}(\mathbb{\zeta })\rightarrow M$.
\item The group $\mathcal{G}\left( M\right) $ acts transitively on the
manifold $M$ and therefore rational differential invariants of order $\leq k$ for rational
weakly nonlinear operators form a field $\mathbf{F}%
_{k}^{w}$. We have the embedding $\mathbb{\zeta }_{k,l}^{\ast }:\mathbf{F}%
_{l}^{w}\hookrightarrow \mathbf{F}_{k}^{w},$ if $k\geq l,$ where $\mathbb{%
\zeta }_{k,l}:\mathbf{J}^{k}\left( \mathbb{\zeta }\right) \rightarrow 
\mathbf{J}^{l}\left( \mathbb{\zeta }\right) $ are the natural projections,
and $\mathbf{F}^{w}=\bigcup\limits_{k\geq 0}\mathbf{F}_{k}^{w}$ is the
field of all rational differential invariants of rational weakly nonlinear
operators.
\end{itemize}

We consider the following vector bundles over manifold $\mathbf{J}^{0}\left( \pi
\right) =M\times \mathbb{R}$:
\begin{enumerate}
\item jet bundles of functions on manifold:
\begin{equation*}
\pi _{l,0}\colon\mathbf{J}^l(\pi)\longrightarrow\mathbf{J}^0(\pi),
\end{equation*}
\item jet bundles of operators:
\begin{equation*}
\mathbb{\zeta }_k\colon\mathbf{J}^k(\mathbb{\zeta })\longrightarrow
M\times\mathbb{R},
\end{equation*}
\item the Whitney sum of vector bundles $\mathbb{\zeta }_{k}$ and $\pi _{l,0}$ 
\begin{equation*}
r_{k,l}\colon\mathbf{RP}^{k,l}=\mathbb{\zeta }_{k}\oplus _{M\times \mathbb{R}}\pi
_{l,0}\longrightarrow M\times \mathbb{R}.
\end{equation*}
 \end{enumerate}
 
We call the last bundle as the \textit{bundle of related pairs}.

Elements of the total space of this bundle are \textit{related pairs} \\$\left( [A]_{\left(
x,y\right) }^{k},[f]_{x}^{l}\right)$ consisting of $k$-jet $[A]_{\left(
x,y\right) }^{k}$ of operator $A\in \mathit{Diff}_3\left( M,\mathbb{R}\right)$ at the point $\left( x,y\right) \in M\times \mathbb{R}$ and $l$ -jet $[f]_{x}^{l}$ of function $f\in C^{\infty }(M)$ at the point $x\in M$ under condition that $f\left( x\right) =y$.

The group $\mathcal{G}(M) $ acts by prolongations in the bundles $r_{k,l}$ and by invariants of this action (or \textit{invariants of related pairs}) we mean functions on the total space 
$\mathbf{CP}^{k,l}=\mathbf{J}^k(\zeta)\oplus_{M\times \mathbb{R}}\mathbf{J}^l(\pi)$.
that are $\mathcal{G}\left(M\right) $-invariant and rational along fibers of the projection 
$\pi \circ r_{k,l}:\mathbf{RP}^{k,l}\rightarrow M$.

Because of transitivity of $\mathcal{G}(M) $-action on $M,$ all
such functions are completely determined by their values on fibre $\left(
\pi \circ r_{k,l}\right) ^{-1}(a) $ at a base point $a\in M$.
Therefore, $\mathcal{G}\left( M\right) $-invariants of related pairs form an 
$\mathbb{R}$-field $\mathbf{F}_{k,l},$ that is a subfield of the filed $\mathbf{Q}_{k,l}$ of all rational functions on the fibre $\left( \pi \circ r_{k,l}\right) ^{-1}\left( a\right)$.

The natural projections $\mathbf{CP}^{k^{\prime },l^{\prime }}\rightarrow 
\mathbf{CP}^{k,l},$ where $k\leq k^{\prime},\,l\leq l^{\prime}$, give us
embeddings of fields $\mathbf{F}_{k,l}\subset \mathbf{F}_{k^{\prime
},l^{\prime }}$,\; $\mathbf{Q}_{k,l}\subset \mathbf{Q}_{k^{\prime },l^{\prime }}$
and we define the fields $\mathbf{F}_{l},\;\mathbf{Q}_{l}$ by the inductive limits:
\begin{equation*}
\mathbf{F}_{l}=\bigcup\limits_{k\geq 0}\mathbf{F}_{k,l},\quad\mathbf{Q}
_{l}=\bigcup\limits_{k\geq 0}\mathbf{Q}_{k,l}.
\end{equation*}

Remark that $\mathbf{F}_0=\mathbf{F}^w$ is just the field of rational
differential invariants of rational weakly nonlinear differential operators.

Below we will discuss various methods of finding invariants, but first of all we remark that the vector field $\partial_{y}$ on $\mathbf{J}^{0}(\pi) =M\times \mathbb{R}$ is an invariant of the $\mathcal{G}(M)$-action. Therefore its $l$-th prolongation $\partial _{y}^{(l)}$ on 
$\mathbf{J}^{l}(\pi)$ is also $\mathcal{G}(M)$-invariant. The same is valid for the total derivation $\displaystyle\frac{d}{dy}$ that acts in 
$\mathbf{J}^{\infty}(\zeta)$. All together they define 
$\mathcal{G}\left( M\right)$-invariant derivation $\nabla$ in the fields $\mathbf{Q}_{l}$, as well as in $\mathbf{F}_{l}$, where 
\begin{equation*}
\nabla(\alpha\beta) =\frac{d\alpha}{dy}\beta +\alpha\partial_{y}^{(l)}(\beta),
\end{equation*}
$\alpha\in\mathbf{Q}_{0}$ and $\beta$ is a function on $\mathbf{J}^{l}(\pi)$.

\section{Construction of invariants}

At first we consider $y$ as a parameter and identify operators $A\in\mathit{Diff}_3(M, \mathbb{R})$ with 1-parametric family $A_y$ of operators in $\mathrm{Diff}_3(M)$.

Remark that $y$ is a $\mathcal{G}(M)$-invariant. Therefore, for any invariant 
$I\in\mathbf{F}_k^{(3)}$ of linear differential operators on manifold $M$, the function 	
$\widehat{I}\colon J^k(\zeta)\rightarrow\mathbb{R}$, where
$$	
\widehat{I}\big([A]^k_{(x, y)}\big)=I\big([A_y]^k_x\big)
$$
is a $\mathcal{G}(M)$-invariant too.

Thus we get a mapping 
$$
  \mathbf{F}^{(3)}_*\longrightarrow\mathbf{F}_0=\mathbf{F}^w,\quad I\mapsto\widehat{I},
$$ 	
that immediately gives us invariants of weakly nonlinear operators.

Moreover, application of the invariant derivation $\nabla$ essentially increases their amount.

As we have seen,  Proposition \ref{Prpstn7},  the mapping 
$$ 
\Theta _l\colon\mathbf{RP}^{l,l}\longrightarrow \mathbf{J}^l(\chi) , \quad
\Theta _l\colon ([A]_{(x,y)}^l,[f]_x^l)\mapsto [ A_f]_x^l,
$$
commutes with the $\mathcal{G}(M)$-action. Therefore, 
$$
  \Theta_l^*(I)\in\mathbf{F}_l,
$$ 
i.e., it is an invariant of related pairs for any invariant $I\in\mathbf{F}_l^{(3)}$.

\subsubsection{Descent procedure}\phantom{1}

To get invariants of weakly nonlinear operators from invariants of related
pairs we consider the following descent procedure for invariants
$$
  \mathbf{F}_{l}\longrightarrow \mathbf{F}^w.
$$ 

Let $I_{0}\in \mathbf{F}_{l}$ be an invariant of related pairs and let 
$I_{1}=\nabla(I_{0})\in\mathbf{F}_{l}, \ldots, I_{i+1}=\nabla(I_i)\in\mathbf{F}_l$ be its invariant derivatives.
Remark that the transcendence degree of the field $\mathbf{Q}_l$ over $\mathbf{%
Q}_0$ equals $N=\dim\pi _{l,0}$.
Therefore, (\cite{ZrskSml}), there are polynomial relations between rational functions ${ I_{0},...,I_{N}}$.
Denote by $J\left( I\right) \subset \mathbf{Q}_{0}[X_{0},...,X_{N}]$ the
ideal of these relations.

\begin{theorem}
Let $b_{1},..,b_{r}\in \mathbf{Q}_{0}[X_{0},...,X_{N}]$ be the reduced Gr%
\"{o}bner basis in the ideal $J(I) $ with respect to the standard lexicographic order. Then the coefficients of polynomials $b_{i}$ are natural
invariants of weakly nonlinear operators.
\end{theorem}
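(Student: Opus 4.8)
The plan is to exploit the $\mathcal{G}(M)$-equivariance of the whole construction and the functoriality of reduced Gröbner bases. First I would recall that, by the descent setup, we have the $\mathcal{G}(M)$-invariant derivation $\nabla$ on the field $\mathbf{Q}_l$ of rational functions on the fibre over a base point $a\in M$, together with the chain $I_0,I_1=\nabla(I_0),\dots,I_N=\nabla^{N}(I_0)$, all lying in $\mathbf{Q}_l$. Since the transcendence degree of $\mathbf{Q}_l$ over $\mathbf{Q}_0$ is $N=\dim\pi_{l,0}$, the $N+1$ elements $I_0,\dots,I_N$ are algebraically dependent over $\mathbf{Q}_0$, so the ideal $J(I)\subset\mathbf{Q}_0[X_0,\dots,X_N]$ of their relations is nonzero. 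The key observation is that $J(I)$ is defined entirely in terms of the $\mathcal{G}(M)$-invariant data $(\mathbf{Q}_0,\nabla,I_0)$, so I would argue it is preserved by the natural action of $\mathcal{G}(M)$ on the coefficient field $\mathbf{Q}_0$.

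Concretely, fix $\phi\in\mathcal{G}(M)$. The prolonged action sends the fibre coordinates to new rational functions; because $y$, $\partial_y$, $d/dy$ and hence $\nabla$ are $\mathcal{G}(M)$-invariant, and $I_0\in\mathbf{F}_l$ is an invariant of related pairs, each $I_i$ is carried to itself under the action (as an element of the abstract field $\mathbf{Q}_l$, after the identification using transitivity on $M$). A polynomial $P(X_0,\dots,X_N)=\sum_\alpha c_\alpha(q)X^\alpha$ with coefficients $c_\alpha\in\mathbf{Q}_0$ lies in $J(I)$ iff $\sum_\alpha c_\alpha\, I^\alpha=0$ in $\mathbf{Q}_l$; applying $\phi$ to this identity and using $\phi\cdot I_i=I_i$ shows $\sum_\alpha (\phi\cdot c_\alpha)\,I^\alpha=0$, i.e. the polynomial $\phi\cdot P:=\sum_\alpha(\phi\cdot c_\alpha)X^\alpha$ also lies in $J(I)$. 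Hence $\mathcal{G}(M)$ acts on $J(I)$ by its action on the coefficients, fixing the variables $X_0,\dots,X_N$ and the monomial ordering.

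Now invoke the uniqueness of the reduced Gröbner basis: for a fixed monomial order, the reduced Gröbner basis $\{b_1,\dots,b_r\}$ of an ideal over a field is uniquely determined by the ideal. Since $\mathcal{G}(M)$ acts by field automorphisms of $\mathbf{Q}_0$ fixing the $X_i$ and the lex order, it permutes Gröbner bases; but a group of such automorphisms sends the reduced Gröbner basis of $J(I)$ to the reduced Gröbner basis of $\phi\cdot J(I)=J(I)$, which is the same set $\{b_1,\dots,b_r\}$. By uniqueness we must have $\phi\cdot b_i=b_i$ for each $i$ (the ordering and the normalization of leading coefficients to $1$ pin down the labelling). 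Comparing coefficients of each monomial $X^\alpha$ in $\phi\cdot b_i=b_i$ yields $\phi\cdot c_{i,\alpha}=c_{i,\alpha}$ in $\mathbf{Q}_0$ for every $\phi\in\mathcal{G}(M)$; that is, every coefficient $c_{i,\alpha}$ is a $\mathcal{G}(M)$-invariant rational function on the fibre, hence a natural differential invariant of weakly nonlinear operators, as claimed.

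The main obstacle I anticipate is making precise the claim that $\mathcal{G}(M)$ "acts on $\mathbf{Q}_0$ by field automorphisms fixing the transcendence-basis role of the $X_i$" — i.e., that the algebraic-dependence relation is genuinely $\mathcal{G}(M)$-equivariant at the level of the abstract polynomial ring, not merely pointwise. Care is needed because the identification of invariants with functions on a single fibre uses transitivity on $M$, and one must check that pulling back along $\phi$ and then re-identifying via transitivity is the same as acting on coefficients; this is exactly the content of the naturality in Proposition~\ref{Prpstn7} combined with the $\mathcal{G}(M)$-invariance of $\nabla$, but it should be spelled out. A secondary subtlety is ensuring the reduced Gröbner basis is literally unique (not just up to the group action) so that equivariance forces termwise invariance — this is standard once one fixes that leading coefficients are normalized to $1$ and the lex order is the same $\mathcal{G}(M)$-independent order throughout.
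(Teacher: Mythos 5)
Your proposal is correct and follows exactly the paper's argument: the $\mathcal{G}(M)$-action preserves the ideal $J(I)$ and the lexicographic order, so by uniqueness of the reduced Gr\"obner basis the action fixes each $b_i$ and hence its coefficients. You merely spell out in more detail the equivariance of $J(I)$ (via invariance of $\nabla$ and the $I_i$) and the normalization that pins down each $b_i$, both of which the paper leaves implicit.
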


\begin{proof}
The action of the diffeomorphism group preserves the ideal $J\left( I\right) $
as well as the lexicographic order. The reduced Gr\"{o}bner basis in an ideal
with respect to the lexicographic order is unique (\cite{Grbnr}), and therefore, the action
preserves elements of the basis and their coefficients.
\end{proof}

\section{Example $n=1, k=3$}

Let $M=\mathbb{R}$.

Then in coordinate $x$ on M, an operator $A\in\mathrm{Diff}_3(M)$ has the form
$$ 
  A=a_3\partial^3+a_2\partial^2+a_1\partial+a_0,
$$
where the all coefficients are smooth functions on $x$ and $\partial=\partial_x$.

The symbol $a_3\partial^3$ of $A$ defines an invariant connection on the line $\R$ with the Christoffel coefficient $\Gamma$:
$$
\Gamma=-\frac{a'_3}{3a_3},
$$
the total symbol $\sigma_{(3)}$ of $A$ with respect to the Wagner connection $\nabla$ is the following:
$$ 
\sigma_{(3)}  = \sigma_3 + \sigma_2 + \sigma_1 + \sigma_0,
$$
where 
\begin{equation}\label{ScInvrnts} 
\begin{aligned}
\sigma_3=\, &a_3\partial^3,\\
\sigma_2= \, &(a_2-a'_3)\partial^2,\\
\sigma_1= \, &\big(a_1-\frac{2(a'_3)^2a_3+3a'_3-a''_3a_3}{9a_3}-(a_2-a'_3)\frac{a'_3}{3a_3}\big)\partial,\\
\sigma_0=\, &a_0
\end{aligned} 
\end{equation}
and the natural splitting of the operator $A$ is the following:
\begin{equation}\label{NtrSplit}
 A=\widehat{\sigma_{(3)}}= \widehat{\sigma_3}+\widehat{\sigma_2} + \widehat{\sigma_1}+ \widehat{\sigma_0},
\end{equation}
where the differential operators $\widehat{\sigma_k}$ are the quantizations of the symbols 
$\sigma_k$, see \cite{LY3}: 
\begin{align*}
\widehat{\sigma_3}&=a_3\partial^3+a'_3\partial^2+\frac{2(a'_3)^2a_3+3a'_3-a''_3a_3}{9a_3}\,\partial,\\
\widehat{\sigma_2}&=(a_2-a'_3)\big(\partial^2+\frac{a'_3}{3a_3}\,\partial\big),\\
\widehat{\sigma_1}&=\big(\, a_1-\frac{2(a'_3)^2a_3+3a'_3-a''_3a_3}{9a_3}-(a_2-a'_3)\frac{a'_3}{3a_3}\,\big)\partial,\\
\widehat{\sigma_0}&=a_0.
 \end{align*}

Therefore, we get from \eqref{ScInvrnts} the following $\mathcal{G}(M)$-invariants of operators $A$:
\begin{align*}
I_0&=a_0,\\
I_1&=\big\langle\,\sigma_1,\,da_0\,\big\rangle=\big(\,a_1-\frac{2(a'_3)^2a_3+3a'_3-a''_3a_3}{9a_3}-(a_2-a'_3)\frac{a'_3}{3a_3}\,\big) a'_0,\\
I_2&=\big\langle\,\sigma_2,\,da_0^2\,\big\rangle=(a_2-a'_3)(a_0')^2,\\
I_3&=\big\langle\,\sigma_3,\,da_0^3\,\big\rangle=a_3(a_0')^3.\\
\end{align*}

Now take an operator $A\in\mathrm{Diff}_3(M, \mathbb{R})$. 
\begin{align*} 
  A&=a_3(x,y)\partial_x^3+a_2(x,y)\partial_x^2+a_1(x,y)\partial_x+a_0(x,y),\\
  A_f&=a_3(x,f)\partial_x^3+a_2(x,f)\partial_x^2+a_1(x,f)\partial_x+a_0(x,f),\quad f\in C^{\infty}(M). 
\end{align*}

Then, the corresponding invariants of relates pairs are the following:
\begin{align*}
I_0=&a^0,\\
I_1 =&\Big(\,a_1-\frac{2(a_{3,x}+a{3,y}f')^2a_3+3(a_{3,x}+a_{3,y}f')}{9a_3}\\
&-\frac{\big(a_{3, xx}+2a_{3, xy}f'+a_{3, yy}(f')^2+(a_{3, x}+a_{3, y})f''\big)a_3}{9a_3}\\
&-\big(a_2-(a_{3, x}+a_{3, y}f')\big)\frac{(a_{3, x}+a_{3, y}f')}{3a_3}\,\Big)(a_{0, x}+a_{0, y}f') ,\\
I_2=&\big(a_2-a_{3, x}-a_{3, y}f'\big)(a_{0, x}+a_{0, y}f')^2,\\
I_3=&a_3(a_{0, x}+a_{0, y}f')^3.\\
\end{align*}

From the expression of $I_3$, we get 
$$
  f'=\big((\,I_3/a_3\,)^{1/3}-a_{0,x}\,\big)/a_{0,y}.
$$ 
Substituting it into $I_2$, we get a relation of the form
$$
KI_2+K_1I_3^{1/3}+K_2I_3^{2/3}+K_3I_3+K_0=0,
$$
where
$$
 \frac{K_1}{K},\quad\frac{K_2}{K},\quad\frac{K_3}{K},\quad\frac{K_0}{K}
$$
are invariants of operators $A\in\mathrm{Diff}_3(M, \mathbb{\R})$, i.e. invariants of weakly nonlinear operators.

\section{Equivalence of weakly nonlinear operators}

Let $z$ be natural differential invariant of weakly nonlinear operators and 
$A\in\mathit{Diff}_3(M,\mathbb{R})$.
 
Then the values $z(A, y_0) =s_{A}^{\ast}( z)$ is function rational in $y$ with coefficients in $C^{\infty}(M)$.  Values $z(A, y_0)$ of this function for a given value $y_0$ is a smooth function on $M$.

We say that the operator $A$ is {\it in general position} if for any point $a\in M$ there are: natural invariants $z_1, z_2$, a value $y_0$ of $y$, and a neighborhood $U\subset M$, $a\in U$, such that the mapping
$$
Z_{A, y_0}\colon U\mathcal{\longrightarrow }\,\mathbf{D}\subset\mathbb{R}^2, \quad
Z_{A, y_0}\colon x\mapsto \big( z_1(A, y_0) (x), z_2(A, y_0)(x) \big) 
$$
is a local diffeomorphism.

We say, that the mapping $Z_{A, y_0}$ is a {\it natural chart} of the operator $A$ and the functions $z_1(A, y_0), z_2(A, y_0 )$ are {\it natural coordinates} on $U$.

We call the atlas of these charts
 $\{U_\alpha, \phi_\alpha\colon U_\alpha\rightarrow\mathbf{D}_\alpha\subset\mathbb{R}^2\}$ {\it natural} if coordinates  $\phi_\alpha=\big(z_1^{\alpha}(A, y_0), z_2^{\alpha}(A, y_0) \big)$ are given by distinct invariants $\big(z_1^{\alpha}(A, y_0), z_2^{\alpha}(A, y_0)\big)\neq\big( z_1^{\beta}(A, y_0), z_2^{\beta}(A, y_0)\big)$, when $\alpha\neq\beta$.

We denote by $\mathbf{D}_{\alpha \beta}=\phi_\alpha(U_{\alpha}\cap U_{\beta})$ and we assume that domains $\mathbf{D_{\alpha}}$ and $\mathbf{D_{\alpha \beta}}$ are connected and simply connected.

Let $A_{\alpha}=\phi_{\alpha *}(A|_{U_{\alpha}})$, $A_{\alpha \beta}=\phi_{\alpha *}(A|_{U_{\alpha}\cap U_{\beta}})$ be the images of the operator $A$ in these coordinates. 

Then $\phi_{\alpha \beta *}(A_{\alpha \beta})=A_{\beta \alpha }$, where $\phi_{\alpha \beta}=\phi_{\beta}\circ\phi_{\alpha}^{-1}\colon \mathbf{D_{\alpha \beta}}\rightarrow\mathbf{D_{\beta \alpha}}$ are the transition mapping.

Take now two operators $A, A'\in \mathit{Diff}_3(M,\mathbb{R})$ and consider their natural charts $Z_{A, y_{0}}$ and $Z_{A', y_{0}^{\prime }}$. 

The $\mathcal{G}(M)$-equiva\-lence of these operators means that the local diffeomorphism  $Z_{A', y_{0}^{\prime}}^{-1}\circ Z_{A, y_{0}}$ does not depend on choice of $y_{0}$  and $y_{0}^{\prime }$. In this case we will say that these charts are \textit{coordinated}.

Summarizing we get the following result. 

\begin{theorem}
Let operators $A, A'\in\mathit{Diff}_3(M, \mathbb{R})$ be in general position, then they are 
$\mathcal{G}(M)$-equivalent if and only if the following conditions hold:
\begin{enumerate}
\item the mappings $\{\phi'_\alpha\colon U'_\alpha\rightarrow\mathbf{D}_\alpha\}$,\\ 
where $\phi'_\alpha=\big(z_1^{\alpha}(A', y'_0), z_2^{\alpha}(A', y'_0)\big)$ and  $U'_{\alpha}=(\phi'_\alpha)^{-1}(\mathbf{D}_\alpha)$,\\ constitute a natural atlas for the operator $A'$,
\item the charts $\phi_\alpha\in\{\phi_\alpha\colon U_\alpha\rightarrow\mathbf{D}_\alpha\}$ and $\phi'_\alpha\in\{\phi'_\alpha\colon U'_\alpha\rightarrow\mathbf{D}_\alpha\}$ are coordinated, 
\item $\phi'_{\alpha \beta}=\phi_{\alpha \beta}$, 
\item $A_{\alpha}=\phi'_{\alpha *}\big(A'\big|_{U'_{\alpha}}\big)$ and 
$A_{\alpha \beta}=\phi'_{\alpha *}\big(A'\big|_{U'_{\alpha\cap\beta} }\big)$.
\end{enumerate}
\end{theorem}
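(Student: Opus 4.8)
The plan is to prove the two directions of the equivalence separately, using the natural charts as the bridge that transports all the structure. For the forward direction, assume $A$ and $A'$ are $\mathcal{G}(M)$-equivalent, so there is a diffeomorphism $\psi\in\mathcal{G}(M)$ with $\psi_*(A)=A'$. The key observation is that every natural differential invariant $z$ is, by definition, $\mathcal{G}(M)$-invariant, so $z(A',y_0)\circ\psi = z(A,y_0)$ for every value $y_0$; in particular the invariants $z_i^\alpha$ used to build the charts $\phi_\alpha$ of $A$ pull back through $\psi$ to the corresponding functions for $A'$. First I would use this to show $U'_\alpha = \psi(U_\alpha)$ and that $\phi'_\alpha = \phi_\alpha\circ\psi^{-1}$, which immediately gives item (1) (the $\phi'_\alpha$ form a natural atlas, because the atlas structure is diffeomorphism-covariant by Proposition \ref{Prpstn7}) and item (2) (the charts are coordinated, since $Z_{A',y'_0}^{-1}\circ Z_{A,y_0}=\psi^{-1}$ is independent of $y_0,y'_0$). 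Item (3) then follows because $\phi'_{\alpha\beta} = \phi'_\beta\circ(\phi'_\alpha)^{-1} = \phi_\beta\circ\psi^{-1}\circ\psi\circ\phi_\alpha^{-1} = \phi_{\alpha\beta}$, and item (4) follows because $\phi'_{\alpha*}(A'|_{U'_\alpha}) = \phi_{\alpha*}\circ\psi^{-1}_*\circ\psi_*(A|_{U_\alpha}) = \phi_{\alpha*}(A|_{U_\alpha}) = A_\alpha$, using again the naturality of $\Theta_w$ from Proposition \ref{Prpstn7}.

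For the converse, assume conditions (1)--(4) hold, and reconstruct the diffeomorphism locally and then glue. On each $U_\alpha$ set $\psi_\alpha = (\phi'_\alpha)^{-1}\circ\phi_\alpha\colon U_\alpha\to U'_\alpha$; this is a local diffeomorphism because both $\phi_\alpha$ and $\phi'_\alpha$ are, and by item (2) it does not depend on the auxiliary choices $y_0,y'_0$. Condition (4) says exactly that $\psi_{\alpha*}(A|_{U_\alpha}) = \phi'^{\,-1}_{\alpha*}\phi_{\alpha*}(A|_{U_\alpha}) = \phi'^{\,-1}_{\alpha*}(A_\alpha) = A'|_{U'_\alpha}$, again invoking naturality of the assignment $\phi\mapsto\phi_*$ on $\mathit{Diff}_3(M,\mathbb R)$. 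So $\psi_\alpha$ is a local $\mathcal{G}$-equivalence on each chart. It remains to check that the $\psi_\alpha$ agree on overlaps: on $U_\alpha\cap U_\beta$ we have $\psi_\beta^{-1}\circ\psi_\alpha = \phi_\beta^{-1}\circ\phi'_\beta\circ\phi'^{\,-1}_\alpha\circ\phi_\alpha = \phi_\beta^{-1}\circ\phi'_{\alpha\beta}{}^{-1}\circ\phi_\alpha$, wait --- more carefully, $\psi_\alpha=(\phi'_\alpha)^{-1}\phi_\alpha$ and $\psi_\beta=(\phi'_\beta)^{-1}\phi_\beta$, so on the overlap $\psi_\alpha\circ\psi_\beta^{-1}$ compared via the transition maps reduces, using $\phi'_{\alpha\beta}=\phi_{\alpha\beta}$ from item (3), to the identity; hence $\psi_\alpha|_{U_\alpha\cap U_\beta} = \psi_\beta|_{U_\alpha\cap U_\beta}$. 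The connectedness and simple-connectedness assumptions on $\mathbf{D}_\alpha$ and $\mathbf{D}_{\alpha\beta}$ guarantee there are no monodromy obstructions, so the $\psi_\alpha$ glue to a global diffeomorphism $\psi\in\mathcal{G}(M)$ with $\psi_*(A)=A'$.

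The main obstacle I expect is the gluing step: one must verify that the locally defined equivalences $\psi_\alpha$ are genuinely compatible on triple overlaps, not merely pairwise, so that they assemble into a single global diffeomorphism rather than a Cech $1$-cocycle with values in $\mathcal{G}$. This is where conditions (3) and the topological hypotheses on the domains do the real work: condition (3) forces the transition cocycle of the $A'$-atlas to coincide with that of the $A$-atlas, and simple-connectedness kills the potential $H^1$ obstruction to integrating the local identifications. A secondary technical point, which I would handle by a short lemma, is the independence of $\psi_\alpha$ from the parameters $y_0,y'_0$: this is precisely the content of the charts being \emph{coordinated}, and it is what upgrades a family of fibrewise statements into a statement about the operators $A,A'\in\mathit{Diff}_3(M,\mathbb R)$ themselves rather than about their restrictions $A_f$.
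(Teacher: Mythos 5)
Your argument is correct and takes essentially the same route as the paper: the paper's one-sentence proof is precisely your forward observation that naturality of the invariants, $\psi^{*-1}\big(z(A)\big)=z\big(\psi_*(A)\big)$, forces any equivalence $\psi$ to be the identity in the natural coordinates (so $\phi'_\alpha=\phi_\alpha\circ\psi^{-1}$ and conditions (1)--(4) follow), with the converse reconstruction via $\psi_\alpha=(\phi'_\alpha)^{-1}\circ\phi_\alpha$ left implicit. Only two cosmetic points: $Z_{A',y'_0}^{-1}\circ Z_{A,y_0}$ equals $\psi$ rather than $\psi^{-1}$, and the gluing step needs no monodromy or \v{C}ech-cocycle discussion, since locally defined maps that literally agree on pairwise overlaps glue automatically -- the only residual check is global bijectivity of $\psi$, obtained by building $\psi^{-1}$ symmetrically from the primed charts.
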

\begin{proof}
Any diffeomorphism $\psi\colon M\rightarrow M$ such that $\psi_*(A) = A'$ transform natural atlas to the natural one and because of $\psi^{* -1}\big(z(A)\big) = z\big(\psi_*(A)\big)$, for any natural invariant $z$, this diffeomorphism has the form of the identity map in
the natural coordinates.
\end{proof}

\begin{corollary}
Let operators $A$ and $A'\in\mathit{Diff}_3(M, \mathbb{R})$ be in general position. Then the weakly nonlinear differential operators $A_w$ and $A'_w$ are $\mathcal{G}(M)$- equivalent if and only if the operators $A$ and $A'$ are $\mathcal{G}(M)$-equivalent.
\end{corollary}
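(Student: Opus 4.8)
The plan is to obtain both implications from Proposition~\ref{Prpstn7} together with one additional observation: the assignment $\Theta_w\colon A\mapsto A_w$ is injective on $\mathit{Diff}_3(M,\mathbb{R})$. The general position hypothesis (inherited from the preceding theorem) is not actually needed for this route; I would keep it only because it is what makes the equivalence theorem above applicable, and one could alternatively phrase the argument through that theorem.

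First I would dispose of the easy implication. If $A$ and $A'$ are $\mathcal{G}(M)$-equivalent, choose $\phi\in\mathcal{G}(M)$ with $\phi_*^{(0)}(A)=A'$. By the last identity of Proposition~\ref{Prpstn7},
$$
\phi_*(A_w)=\big(\phi_*^{(0)}(A)\big)_w=A'_w ,
$$
so $A_w$ and $A'_w$ are $\mathcal{G}(M)$-equivalent.

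For the converse the key step is to recover $A$ from $A_w$. Fix $x_0\in M$ and a value $y_0\neq 0$. For $f\in C^\infty(M)$ with $f(x_0)=y_0$, the number $A_w(f)(x_0)$ depends on $f$ only through the $3$-jet $[f]^3_{x_0}$, and by the very form \eqref{Aw} it is an affine function of the jet coordinates $\partial_i f(x_0)$, $\partial^2_{ij}f(x_0)$, $\partial^3_{ijk}f(x_0)$ on the affine slice $\{f(x_0)=y_0\}$: its linear part has coefficients $a_i(x_0,y_0)$, $a_{ij}(x_0,y_0)$, $a_{ijk}(x_0,y_0)$ and its constant term equals $a_0(x_0,y_0)\,y_0$. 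Since functions with arbitrary prescribed $3$-jet at $x_0$ exist (realized by polynomials in a chart around $x_0$, cut off away from $x_0$), all coefficients $a_\bullet(x_0,y_0)$ are read off from $A_w$, using $y_0\neq0$ for $a_0$. As these coefficients are rational in $y$, agreement for every $y_0\neq0$ forces agreement identically in $y$; hence $\Theta_w$ is injective. Now if $A_w$ and $A'_w$ are $\mathcal{G}(M)$-equivalent, say $\phi_*(A_w)=A'_w$, then Proposition~\ref{Prpstn7} gives $\big(\phi_*^{(0)}(A)\big)_w=A'_w$, and injectivity yields $\phi_*^{(0)}(A)=A'$; thus the same $\phi$ witnesses the equivalence of $A$ and $A'$.

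The only substantive point is the injectivity of $\Theta_w$, which is exactly where weak nonlinearity enters: the affine (in the jet of $f$) shape of $A_w$ lets the nonlinear operator remember its related linear operator $A$, something that would fail for a genuinely nonlinear operator. Everything else is a formal consequence of the naturality recorded in Proposition~\ref{Prpstn7}. As an alternative I could argue that $A_w\sim A'_w$ forces $s_A^*(z)$ and $s_{A'}^*(z)$ to correspond under the equivalence diffeomorphism for every natural invariant $z$, and then invoke the Lie–Tresse separation property and the equivalence theorem above to conclude $A\sim A'$; but the injectivity argument is shorter and makes the general position assumption unnecessary.
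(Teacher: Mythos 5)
Your argument is correct, but it is not the route the paper takes: the paper states this as a corollary of the preceding theorem, i.e.\ it is meant to follow from the equivalence criterion via coordinated natural atlases built from invariants $z(A,y_0)$, which is exactly where the general position hypothesis enters. You instead bypass the invariant-theoretic machinery entirely: the forward implication is the naturality identity $\phi_*(A_w)=\bigl(\phi_*^{(0)}(A)\bigr)_w$ of Proposition~\ref{Prpstn7}, and the converse rests on the observation that $\Theta_w$ is injective, since for fixed $(x_0,y_0)$ the value $A_w(f)(x_0)$ on the slice $\{f(x_0)=y_0\}$ is an affine function of the jet coordinates of $f$ whose linear coefficients are the $a_\bullet(x_0,y_0)$ and whose constant term is $a_0(x_0,y_0)y_0$, so $A$ is recovered from $A_w$ (the factor $y_0$ is harmless for $a_0$ because the coefficients are rational, or even just continuous, in $y$). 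This is sound: it uses precisely the weakly nonlinear form \eqref{Aw}, and it shows the same diffeomorphism witnesses both equivalences. What you gain is a more elementary and stronger statement — the general position hypothesis is not needed, and no invariants are invoked; what the paper's route buys is consistency with its overall program, namely an effective equivalence test expressed through natural differential invariants and natural charts, which is the content the corollary is meant to transfer from linear operators $A\in\mathit{Diff}_3(M,\mathbb{R})$ to the weakly nonlinear operators $A_w$. It would be worth one sentence in your write-up acknowledging that equivalence of $A_w,A'_w$ is taken in the sense $\phi_*\circ A_w\circ\phi_*^{-1}=A'_w$ for a global $\phi\in\mathcal{G}(M)$, which is how the paper's naturality statement is formulated.
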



\end{document}